\newcommand*{\pd}
[2]{\mathchoice{\frac{\partial#1}{\partial#2}}
  {\partial#1/\partial#2}{\partial#1/\partial#2}
  {\partial#1/\partial#2}}
\newcommand{\ddx}[1]{\partial_x^{#1}}
\newtheorem{theorem}{Theorem}
\newtheorem{lemma}[theorem]{Lemma}
\newtheorem{remark}[theorem]{Remark}
\newenvironment{proof}[1][Proof]{\textbf{#1.} }{\ \rule{0.5em}{0.5em}}
\title{\bfseries Bi-Hamiltonian structure of the \\ Oriented Associativity
Equation}
\author{Maxim~V.~Pavlov$^{1}$, Raffaele~F.~Vitolo$^{2}$ \\
$^{1}$Lebedev Physical Institute of Russian Academy of Sciences,\\
Moscow, Leninskij Prospekt, 53;\\
$^{2}$Department of Mathematics and Physics,\\
University of Salento, Lecce, Italy\\
and INFN, Section of Lecce\\
\url{http://poincare.unisalento.it/vitolo}}
\date{}
\begin{document}

\maketitle

\begin{abstract}
  The Oriented Associativity equation plays a fundamental role in the theory of
  Integrable Systems. In this paper we prove that the equation, besides being
  Hamiltonian with respect to a first-order Hamiltonian operator, has a
  third-order non-local homogeneous Hamiltonian operator belonging to a class
  which has been recently studied, thus providing a highly non-trivial example
  in that class and showing intriguing connections with algebraic geometry.

  \textbf{MSC2010}: 37K05.
\end{abstract}

\section{Introduction}

The Associativity equation, or Witten-Dijkgraaf-Verlinde-Verlinde (WDVV)
equation, plays a fundamental role in the geometric theory of Integrable
Systems. Its solutions define Frobenius manifolds, which correspond to
integrable systems; Frobenius manifolds also play a fundamental role in the
theory of quantum cohomology and Gromov--Witten invariants. These connections
were shown by B. Dubrovin in his seminal paper \cite{Dubr}.

The nonlinear partial differential system of equations
\begin{equation}
\frac{\partial ^{2}c^{i}}{\partial a^{j}\partial a^{m}}\frac{\partial
^{2}c^{m}}{\partial a^{k}\partial a^{n}}=\frac{\partial ^{2}c^{i}}{\partial
a^{k}\partial a^{m}}\frac{\partial ^{2}c^{m}}{\partial a^{j}\partial a^{n}}
\label{ori}
\end{equation}%
on $N$ unknown functions $(c^i)$ of $N$ independent variables $(a^j)$ was
introduced in \cite{HM} as a generalization of the Associativity equations.
Its solution define $F$-manifolds, which are still in correspondence with
integrable systems. The far-reaching implication of this generalization are an
active subject of study: flat and bi-flat $F$-manifolds have interesting
connections with Painlev\'e equations \cite{AL16,LPR11,Lor14}; see also the
papers \cite{KoMa07b,KoOr09}) devoted to coisotropic deformations. We call the
system~\eqref{ori} the \textit{Oriented Associativity equation}.

The Oriented Associativity equation admits the scalar linear spectral problem
\begin{equation}
\frac{\partial ^{2}h}{\partial a^{i}\partial a^{j}}=\lambda \frac{\partial
^{2}c^{m}}{\partial a^{i}\partial a^{j}}\frac{\partial h}{\partial a^{m}}
\label{zak}
\end{equation}%
(see, for instance, \cite{Dubr,SPori})%
that ensure that the equation is integrable as it provides a Lax
pair.

We observe that the Associativity equation \cite{Dubr} can be obtained from
(\ref{ori}) by the potential reduction
$c^{i}=\eta ^{im}\pd{F}{a^{m}}$, where $\eta^{ks}$ is a
constant nondegenerate symmetric matrix.

In this paper, we will \emph{prove the existence of a bi-Hamiltonian formalism
  for the Oriented Associativity equation \eqref{ori}}.

The above result has strong analogies with the known results on the
Associativity (WDVV) equation. Indeed, the Associativity equation can be
written as $N-2$ commuting hydrodynamic-type systems of conservation laws
\cite{FM96}. For $N=3$ the (only) system was shown to be bi-Hamiltonian in
\cite{FGMN97}.  Further investigations shown a similar situation in the case
$N=4$: the Hamiltonian operators were, in both cases, a first-order homogeneous
operator (found in \cite{FM96} for $N=4$) and a third-order homogeneous
operator (found in \cite{PV15} for $N=4$). First-order homogeneous operators
\cite{DN83} can be written as $A_1^{ij}=g^{ij}\partial_x$, where $(g^{ij})$ is
a constant matrix, in a suitable coordinate system; third-order homogeneous
operators \cite{DN84} have a more complicated structure, and can be brought to
the form
\begin{equation}
  \label{eq:14}
  A_2^{ij}=\ddx{}\left(g^{ij}\ddx{}+c_{k}^{ij}u_{x}^{k}\right)\ddx{}.
\end{equation}
Such operators have been extensively studied quite recently \cite{FPV14,FPV16}.

Recently, a first-order homogeneous Hamiltonian operator for the simplest case
($N=3$) of Oriented Associativity equation (written in the form of a
hydrodynamic-type system) was found \cite{SPori} with the same method as in the
Associativity case.  It was natural to conjecture that a third-order
homogeneous Hamiltonian operator might exist.

In this paper, we will \emph{prove that the Oriented Associativity equation in
  hydrodynamic-type form admits a \textbf{non-local} third-order homogeneous
  Hamiltonian operator} of a class that was recently introduced by M. Casati,
E.V. Ferapontov and the authors of this paper in \cite{CFPV18}.

The significance of the result is high: indeed, it is known that the
Associativity equations (in hydrodynamic form) in the cases $N=3$ and $N=4$
discussed above correspond to linear line congruences, which are algebraic
varieties in the Pl\"ucker variety all lines of a projective space
\cite{FPV17:_system_cl}. Their third-order Hamiltonian operators correspond to
quadratic line complexes, which are algebraic varieties of lines in a
projective space of different dimension with respect to the previous lines
\cite{FPV14,FPV16}.

The Oriented Associativity equation (in hydrodynamic form) can also be
interpreted as a line congruence, even if we still do not know if the
congruence is linear.  The third-order non-local homogeneous Hamiltonian
operator that we find as the main result in this paper also defines a quadratic
line complex.  It is thus clear that the strong links between the Associativity
equation and projective-geometric varieties are preserved for the Oriented
Associativity equation. We believe that such structures play an important role
in the rich geometry of such equations, with lots of interesting Mathematics
yet to be discovered.

The computation related to finding the non-local Hamiltonian operator is highly
non-trivial, and it is made possible by a systematic use of computer algebra
systems, in particular Reduce and its package \texttt{CDE} for computations
with Hamiltonian operators \cite{KVV17,VitoloHO}.

\section{The Oriented Associativity equation}

The system of quadratic equations 
\begin{equation}
  \label{eq:11}
\begin{split}
&u_{xx}=v_{xt}w_{xx}-v_{xx}w_{xt}+w_{xt}^{2}-w_{xx}w_{tt}, \\
&u_{xt}=v_{tt}w_{xx}-v_{xt}w_{xt}, \\
&u_{tt}=v_{xt}^{2}-v_{xx}v_{tt}+v_{tt}w_{xt}-v_{xt}w_{tt},
\end{split}%
\end{equation}
is the Oriented Associativity equation in the simplest case $N=3$. It is
endowed by the Lax pair
\begin{align}
  \label{eq:24}
  &\begin{pmatrix}
    \psi \\ \psi_1 \\ \psi_2
  \end{pmatrix}_x = \lambda
  \begin{pmatrix}
    0 & 1 & 0 \\ u_{xx} & v_{xx} & w_{xx} \\ u_{xt} & v_{xt} & w_{xt}
  \end{pmatrix}
  \begin{pmatrix}
    \psi \\ \psi_1 \\ \psi_2
  \end{pmatrix},
  \\ \label{eq:25}
  &
  \begin{pmatrix}
    \psi \\ \psi_1 \\ \psi_2
  \end{pmatrix}_t = \lambda
  \begin{pmatrix}
    0 & 0 & 1 \\ u_{xt} & v_{xt} & w_{xt} \\ u_{tt} & v_{tt} & w_{tt}
  \end{pmatrix}
  \begin{pmatrix}
    \psi \\ \psi_1 \\ \psi_2
  \end{pmatrix}
\end{align}
Let us introduce a new set of field variables $%
q^{1}=u_{xx},q^{2}=u_{xt},q^{3}=v_{xx},q^{4}=v_{xt}$, $%
q^{5}=w_{xx},q^{6}=w_{xt}$. Then, the quadratic system~\eqref{eq:11} becomes
the six component hydrodynamic type system of conservation laws
\begin{eqnarray}
q_{t}^{1} &=&q_{x}^{2},\text{ \ \ \ \ \ \ \ \ }q_{t}^{2}=\partial _{x}\frac{%
q^{2}q^{6}+q^{1}q^{4}-q^{2}q^{3}}{q^{5}},  \notag \\
&&  \notag \\
q_{t}^{3} &=&q_{x}^{4},\text{ \ \ \ \ \ \ \ \ }q_{t}^{4}=\partial _{x}\frac{%
q^{2}+q^{4}q^{6}}{q^{5}},  \label{six} \\
&&  \notag \\
q_{t}^{5} &=&q_{x}^{6},\text{ \ \ \ \ \ \ \ \ }q_{t}^{6}=\partial _{x}\frac{%
(q^{6})^{2}-q^{3}q^{6}+q^{4}q^{5}-q^{1}}{q^{5}}.  \notag
\end{eqnarray}%
The above system (\ref{six}) possesses at least two extra local
conservation laws. One can prove that this system has also three extra
conservation laws%
\begin{equation}\label{eq:17}
\partial _{t}v^{k}=\partial _{x}\frac{(v^{k})^{2}-q^{3}v^{k}-q^{1}}{q^{5}},%
\text{ \ }k=1,2,3,
\end{equation}%
where $v^i$ are the roots of the characteristic polynomial $\lambda^{3}-(q^{3}+q^{6})\lambda^{2}+ (q^{3}q^{6}-q^{4}q^{5}-q^{1})\lambda +
q^{1}q^{6}-q^{2}q^{5}$ of one of the
matrices of the Lax pair~\eqref{eq:24}. By Vi\`{e}te's theorem we have
$q^{3}+q^{6}=v^{1}+v^{2}+v^{3}$, so that only two of the densities $v^i$ are new.

\section{First-order Hamiltonian structure}

The hydrodynamic-type system \eqref{six} admits a first-order homogeneous
Hamiltonian operator. This class of Hamiltonian operators was first introduced
in \cite{DN83}. Operators in this class always admit a coordinate system in
which they can be presented as $A_1 = g^{ij}\partial_x$, where $g^{ij}$ is a
constant matrix. The results in this sections were found in~\cite{SPori}, using
techniques that are analogous to those used in~\cite{FGMN97}.

We can change the coordinates in the above hydrodynamic-type system to the
new coordinates $(u^k)$ defined by the Vi\`ete formulae: 
\begin{equation}
\begin{split}
&u^{1}=v^{1},\qquad u^{2}=v^{2}, \\
&u^{3}=v^{3},\qquad u^{4}=q^{4}, \\
&u^{5}=q^{5},\qquad u^{6}=2q^{3}-(v^{1}+v^{2}+v^{3}).
\end{split}
\label{eq:12}
\end{equation}
They are related with $(q^i)$ by the formulae
\begin{equation}  \label{eq:13}
\begin{split}
& q^{1}=\frac{1}{4}
(u^{1}+u^{2}+u^{3})^{2}-(u^{1}u^{2}+u^{1}u^{3}+u^{2}u^{3})-\frac{1}{4}%
(u^{6})^{2}-u^{4}u^{5}, \\
&q^{2}=\frac{2u^{1}u^{2}u^{3}+(u^{1}+u^{2}+u^{3}-u^{6})q^{1}}{2u^{5}}, \\
&q^{3}=\frac{1}{2}(u^{1}+u^{2}+u^{3}+u^{6}), \\
&q^{4}=u^{4}, \\
&q^{5}=u^{5}, \\
&q^{6}=\frac{1}{2}(u^{1}+u^{2}+u^{3}-u^{6}).
\end{split}%
\end{equation}
Note that the inverse formulae contain cubic roots, and have a much more
complicated expression. In the coordinates $(u^i)$ a Hamiltonian formulation of
the system becomes immediate:
\begin{equation}\label{eq:26}
u_{t}^{i}=g^{ik}\partial _{x}\frac{\partial H}{\partial u^{k}},
\quad\text{where}\quad
g^{ik}=-
\begin{pmatrix}
0 & 1 & 1 & 0 & 0 & 0 \\ 
1 & 0 & 1 & 0 & 0 & 0 \\ 
1 & 1 & 0 & 0 & 0 & 0 \\ 
0 & 0 & 0 & 0 & 1 & 0 \\ 
0 & 0 & 0 & 1 & 0 & 0 \\ 
0 & 0 & 0 & 0 & 0 & 2%
\end{pmatrix},
\end{equation}
the Hamiltonian density is $H=q^2$ and the momentum density is
$P=q^{1}=\frac{1}{2}\tilde{g}_{ik}u^{i}u^{k}$.

\section{Third-order nonlocal operators and\newline
systems of conservation laws}

After the results in \cite{SPori}, we might be tempted to conjecture that, by
analogy with the Associativity equation, also the Oriented Associativity
equation is endowed by two local homogeneous Hamiltonian operators, of first
order and third order. Strictly speaking, this is not true.
We recall that the conditions for an operator of the form~\eqref{eq:14} to be
Hamiltonian (provided $\det(g^{ij})\neq 0$) are
\begin{subequations}\label{eq:32}
\begin{align}
  \label{eq:22}
  &g_{ij}=g_{ji}, \\
  \label{eq:28}
  &c_{nkm}=\frac{1}{3}(g_{mn,k}-g_{kn,m})\\
  \label{eq:23}
  &g_{ij, k}+g_{jk, i}+g_{ki, j}=0, \\
  \label{eq:27}
  &c_{nml,k}+c^s_{ml}c_{snk} =0.
\end{align}
\end{subequations}
where $(g_{ij})^{-1}=(g^{ij})$ and $c_{ijk}=g_{iq}g_{jp}c_{k}^{pq}$.
By repeating the procedure that led to the results in \cite{PV15}, we found a
candidate $(g^{ij})$ for a leading term of a third order homogeneous
Hamiltonian operator. But that fulfills~\eqref{eq:22}, \eqref{eq:28} and
\eqref{eq:23} but does not fulfill~\eqref{eq:27}. After the results
in~\cite{CFPV18}, we conjectured that the third-order homogeneous Hamiltonian
operator $B=(B^{ij})$ might be nonlocal, of the type
\begin{equation}  \label{casimir-nl}
  B^{ij}=\partial_x\circ F^{ij}\circ \partial_x =
  \partial_x^{}(g^{ij}\partial_x^{} + c^{ij}_ku^k_x + c^\alpha
  w^i_{\alpha k}u^k_x\partial_x^{-1}w^j_{\alpha h}u^h_x)\partial_x^{}
\end{equation}
and $w^i_{\alpha k} = w^i_{\alpha k}(u^j)$, with $c^\alpha\in\mathbb{R}$.  In
such an ansatz $F^{ij}$ has the same structure as Ferapontov--Mokhov nonlocal
first-order homogeneous operators
\cite{FM90:_nonloc_hamil,Fer+curv,FerFirst}. However, the two compositions with
$\partial_x$ change the conditions for the operator to be Hamiltonian to
\cite{CFPV18}
\begin{subequations}\label{eq:35}
  \begin{align}
    \label{eq:29}
    &w_{\alpha ij}+w_{\alpha ji}=0, \\
    \label{eq:30}
    &w_{\alpha ij, l}-c^s_{ij}w_{\alpha sl}=0, \\
    \label{eq:31}
    &c_{nml,k}+c^s_{ml}c_{snk} + c^\alpha w_{\alpha ml}w_{\alpha nk}=0,
  \end{align}
\end{subequations}
in addition to~\eqref{eq:22}, \eqref{eq:28}, \eqref{eq:23} (of
course,~\eqref{eq:31} is a modification of ~\eqref{eq:27}), where
$w_{ij}=g_{is}w^s_j$.  We remain with the problem of determining the tensors
$w^i_{\alpha j}$. It is known that in Ferapontov--Mokhov case they are matrices
of commuting flows with respect to the hydrodynamic-type system of which the
operator is Hamiltonian.  In this case, that is false: the condition of
compatibility between $B$ and the Oriented Associativity equation~\eqref{six}
can be derived by the condition that the Hamiltonian operator maps conserved
quantities into symmetries. It was shown in
\cite{KerstenKrasilshchikVerbovetsky:HOpC} that such a condition is equivalent
to finding solutions $B$ to the equation
\begin{equation}
  \label{preHam} \ell _{E}(B(\mathbf{p}))=0,
\end{equation}
over the \emph{adjoint system} (or cotangent covering)
\begin{equation}
\left\{ 
\begin{array}{l}
  E=0, \\ 
  \ell _{E}^{\ast }(\mathbf{p})=0,
\end{array}%
\right.  \label{cot}
\end{equation}%
where $\mathbf{p}$ is an auxiliary (vector) variable, $E^i=u^i_t - (V^i)_x =
0$ is the initial equation, with $V^i=V^i(\mathbf{u})$ the vector of fluxes,
and $\ell_E$ is the formal linearization (Fr\'echet derivative) of $E$ and
$\ell^*_E$ its adjoint operator.

It is easier to compute the condition~\eqref{preHam} in potential coordinates
$b^i_x=u^i$. We have
\begin{equation}  \label{casimir-nl-pot}
  B^{ij}= - (g^{ij}(\mathbf{b}_x)\partial_x^{} + c^{ij}_k(\mathbf{b}_x)b^k_{xx}
  + c^\alpha
  w^i_{\alpha k}(\mathbf{b}_x)b^k_{xx}
  \partial_x^{-1}
  w^j_{\alpha h}(\mathbf{b}_x)b^h_{xx})
\end{equation}
and $E^i = b^i_t - V^i(\mathbf{b}_x)$. Let us introduce the notation
\begin{displaymath}
  \frac{\partial V^i}{\partial b_x^k}= V^i_k,\quad
  \frac{\partial V^i}{\partial b_x^k\partial b^h_x}= V^i_{kh},\quad
  g^{ij}_{,k} = \frac{\partial g^{ij}}{\partial b^k_x},\quad
  c^{ij}_{k,h} = \frac{\partial c^{ij}_k}{\partial b^h_x},
\end{displaymath}
and similarly for other derivatives. We have
\begin{equation}  \label{eq:251}
\ell_F(\boldsymbol{\varphi}) = \partial_t\varphi^i -
V^i_j\partial_x\varphi^j,\quad 
\ell_F^*(\boldsymbol{\psi}) = -\partial_t\psi_k + \partial_x(V^i_k\psi_i).
\end{equation}
so that the adjoint system is
\begin{align}  \label{eq:271}
  &b_{t}^{i}=V^{i}(\mathbf{b}_{x}) \\
  \label{eq:39}
  &p_{k,t} = V^i_{kh}b^h_{xx}p_i + V^i_{k}p_{i,x}
\end{align}
If we assume that $w^i_{\alpha j}(\mathbf{b}_x)b^j_{xx}$ are hydrodynamic-type
symmetries of the system~\eqref{eq:271} then we can prove that they yield
conservation laws on the adjoint system whose densities and fluxes are,
respectively:
\begin{equation}
  \label{eq:40}
  r_{\alpha t} = V^i_j w^j_{\alpha k} b^k_{xx}p_i,
  \quad
  r_{\alpha x} = w^i_{\alpha k} b^k_{xx} p_i.
\end{equation}
The potential variables $r_\alpha$ allow us to represent the operator as
\begin{equation}  \label{eq:451}
B^i(\mathbf{p}) = -g^{ij}p_{j,x} - c^{ij}_kb^k_{xx}p_j - c^\alpha
w^i_{\alpha k}b^k_{xx}r_\alpha.
\end{equation}
\begin{lemma}
  The condition $\ell_F(B(\mathbf{p})) = 0$ is equivalent to the
  conditions:
  \begin{subequations}
\label{eq:467}
\begin{align}  \label{eq:33}
  &- g^{ij} V^h_j + V^i_j g^{jh} = 0 \\
  \label{eq:44}
& - g^{ih}_{,k} V^k_l - g^{ij}2V^h_{jl} - c^{ij}_l V^h_j + V^i_jg^{jh}_{,l}
                  + V^i_jc^{jh}_l=0 \\
  \label{eq:45}
& -g^{ik} V^h_{kl} - c^{ih}_k V^k_l + V^i_kc^{kh}_l = 0. \\
\begin{split}  \label{eq:21}
& - g^{ij}V^h_{jlm} - \frac{1}{2}\big( c^{ih}_{m,j} V^j_l + c^{ih}_{l,j}
V^j_m \big) - c^{ih}_k V^k_{lm} \\
&\hphantom{+} - \frac{1}{2}\big( c^{ij}_m V^h_{jl} + c^{ij}_l V^h_{jm} \big) %
+\frac{1}{2}\big(V^i_jc^{jh}_{m,l} + V^i_j c^{jh}_{l,m}\big) \\
&\hphantom{+} - c^\alpha\frac{1}{2} \big( V^h_k(w^i_{\alpha l}w^k_{\alpha m}
+ w^i_{\alpha m}w^k_{\alpha l}) + V^i_j(w^j_{\alpha l}w^h_{\alpha m}+
w^j_{\alpha m}w^h_{\alpha l})\big) = 0
\end{split}
\\
\begin{split}\label{eq:42}
& - w^i_{\alpha h,k}V^k_m - w^i_{\alpha m,k}V^k_h - w^i_{\alpha k}V^k_{m,h}
\\
&\hphantom{+}- w^i_{\alpha k}V^k_{h,m} + V^i_k w^k_{\alpha m,h} + V^i_k
w^k_{\alpha h,m} = 0
\end{split}
  \\
  \label{eq:43}
& -w^i_{\alpha k} V^k_h + V^i_k w^k_{\alpha h} =0
\end{align}
\end{subequations}
\end{lemma}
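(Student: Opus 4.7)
The plan is a direct computation: substitute $B^i(\mathbf{p})$ from~\eqref{eq:451} into $\ell_F(B(\mathbf{p}))^i = \partial_t B^i - V^i_j\partial_x B^j$ and reduce everything to a polynomial in independent jet variables on the adjoint system~\eqref{eq:271}--\eqref{eq:39}. The vanishing of this polynomial is equivalent to the vanishing of each of its coefficients, and these six coefficients reproduce exactly the identities~\eqref{eq:467}.

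In carrying out $\partial_t B^i$, I would use the chain rule on $g^{ij}(\mathbf{b}_x)$, $c^{ij}_k(\mathbf{b}_x)$, $w^i_{\alpha k}(\mathbf{b}_x)$ together with $b^h_{xt}=V^h_l b^l_{xx}$ and $b^k_{xxt}=V^k_{lh}b^l_{xx}b^h_{xx}+V^k_l b^l_{xxx}$, both derived from~\eqref{eq:271}, and then eliminate the time-derivatives of the auxiliary variables via $p_{j,t}=V^i_{jh}b^h_{xx}p_i+V^i_j p_{i,x}$ from~\eqref{eq:39} and $r_{\alpha,t}=V^i_j w^j_{\alpha k}b^k_{xx}p_i$ from~\eqref{eq:40}. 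The computation of $V^i_j\partial_x B^j$ is straightforward except that the $r_{\alpha,x}$ produced when differentiating the nonlocal tail must be eliminated by $r_{\alpha,x}=w^i_{\alpha k}b^k_{xx}p_i$ (again~\eqref{eq:40}). After these substitutions, $\ell_F(B^i(\mathbf{p}))$ becomes a polynomial expression in the genuinely independent jet coordinates $p_{h,xx}$, $p_{h,x}b^k_{xx}$, $p_h b^k_{xxx}$, $p_h b^l_{xx}b^m_{xx}$, $r_\alpha b^k_{xxx}$ and $r_\alpha b^l_{xx}b^m_{xx}$, the products $b^l_{xx}b^m_{xx}$ being symmetric in $(l,m)$.

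Reading off the six coefficients in order of increasing complexity then yields: the $p_{h,xx}$ coefficient gives~\eqref{eq:33}; the $p_{h,x}b^k_{xx}$ coefficient gives~\eqref{eq:44}; the $p_h b^k_{xxx}$ coefficient gives~\eqref{eq:45}; the $r_\alpha b^k_{xxx}$ coefficient gives~\eqref{eq:43}; the symmetrized $r_\alpha b^l_{xx}b^m_{xx}$ coefficient gives~\eqref{eq:42}; and the symmetrized $p_h b^l_{xx}b^m_{xx}$ coefficient gives~\eqref{eq:21}. The main bookkeeping obstacle is the last identification: here the nonlocal contributions arising from the $r_{\alpha,t}$-substitution inside $\partial_t B^i$ and from the $r_{\alpha,x}$-substitution inside $V^i_j\partial_x B^j$ combine, after $(l,m)$-symmetrization, with the local $g$- and $c$-derivative terms to reproduce precisely the $c^\alpha w w$ quadratic piece on the last line of~\eqref{eq:21}. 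The computation is otherwise linear and purely mechanical, so no genuine analytical difficulty arises; the only real risk lies in sign and symmetrization errors in the quadratic-in-$b_{xx}$ terms, which is why systematic use of computer algebra such as the \texttt{CDE} package is advisable for the full six-component problem.
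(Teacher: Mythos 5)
Your proposal is correct and follows essentially the same route as the paper: expand $\ell_F(B(\mathbf{p}))^i$ directly, eliminate $b^h_{xt}$, $b^k_{xxt}$, $p_{j,xt}$ and the nonlocal derivatives $r_{\alpha,t}$, $r_{\alpha,x}$ via the adjoint system~\eqref{eq:271}--\eqref{eq:39} and~\eqref{eq:40}, and equate to zero the coefficients of the resulting polynomial in the independent jet monomials. Your explicit matching of each monomial ($p_{h,xx}$, $p_{h,x}b^k_{xx}$, $p_hb^k_{xxx}$, $r_\alpha b^k_{xxx}$, and the symmetrized quadratic terms) to the six conditions~\eqref{eq:467} is consistent with the paper's computation.
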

\begin{proof}
We have: 
\begin{align*}
  \ell_F(&B(\mathbf{p}))^i = \\
= & - g^{ij}_{,k} b^k_{xt}p_{j,x} - g^{ij}p_{j,xt} - c^{ij}_{k,h}
b^h_{xt}b^k_{xx}p_j - c^{ij}_kb^k_{xxt}p_j - c^{ij}_kb^k_{xx}p_{j,t} \\
& + V^i_j \big( g^{jk}_{,h}b^h_{xx}p_{k,x} + g^{jk}p_{k,xx} +
c^{jh}_{k,l}b^l_{xx}b^k_{xx}p_h +c^{jh}_kb^k_{xxx}p_h +
c^{jh}_kb^k_{xx}p_{h,x} \big) \\
& - c^\alpha w^i_{\alpha k,h}b^h_{xt}b^k_{xx}r_\alpha - c^\alpha w^i_{\alpha
k}b^k_{xxt}r_\alpha - c^\alpha w^i_{\alpha k}b^k_{xx}V^h_lw^l_{\alpha m}
b^m_{xx}p_h \\
& + V^i_jc^\alpha \big(w^j_{\alpha k,h}b^h_{xx}b^k_{xx}r_\alpha +
w^j_{\alpha k}b^k_{xxx}r_\alpha + w^j_{\alpha k}b^k_{xx}w^h_{\alpha
l}b^l_{xx}p_h).
\end{align*}
After replacing the derivatives $b^h_{xt}$, $b^h_{xxt}$ and $p_{j,xt}$ using
the equations~\eqref{eq:271} and \eqref{eq:39} we obtain a polynomial in $p_j$,
$b^k_{xx}$ and higher $x$-derivatives; its coefficient shall vanish, they are
are the conditions of the statement.
\end{proof}

\begin{remark}
  A direct computation shows that the two flows $V^i$ and
  $w^j_{\alpha k}u^k_{xx}$ commute if and only if the conditions \eqref{eq:42}
  and \eqref{eq:43} hold true. Moreover, by arguments that are similar to those
  of \cite[Theorem~1]{FPV17:_system_cl} it can be proved that~\eqref{eq:21} is
  a consequence of the other equations~\eqref{eq:467} and~\eqref{eq:32},
  \eqref{eq:35}.
\end{remark}

\section{Third-order Hamiltonian structure}

It is known\cite{FPV14} that $g_{ij}$ shall be a polynomial of second degree.
Then, equations~\eqref{eq:33}, \eqref{eq:44}, \eqref{eq:45} are easily solved
with respect to $(g_{ij})$, where $V^i$ is the vector of fluxes of the Oriented
Associativity equation \eqref{six}. We obtain the unique solution:
\begin{multline}  \label{eq:10}
{\scriptsize (g_{ij}) = } \\
{\scriptsize \left(%
\begin{matrix}
2 & 0 & q^3 & -q^5 \\ 
0 & 0 & 2q^5 & 0 \\ 
q^3 & 2q^5 & - 2(q^1 + q^4q^5) & q^5(q^3 - q^6) \\ 
-q^5 & 0 & q^5(q^3 - q^6) & 2(q^5)^2 \\ 
2q^4 & - q^3 + 2q^6 & - q^2 + q^4(q^3 - q^6) & - q^1 + q^3(- q^3 + 2q^6) -
2q^4q^5 - (q^6)^2 \\ 
0 & -q^5 & 2q^4q^5 & q^5( - q^3 + q^6)%
\end{matrix}%
\right.} \\
{\scriptsize \left. 
\begin{matrix}
2q^4 & 0 \\ 
-q^3+2q^6 & -q^5 \\ 
- q^2 + q^4(q^3 - q^6) & 2q^4q^5 \\ 
- q^1 + q^3(- q^3 + 2q^6) - 2q^4q^5 - (q^6)^2 & q^5( - q^3 + q^6) \\ 
2(q^4)^2 & - q^2 +q^4(- q^3 + q^6) \\ 
- q^2 +q^4( - q^3 + q^6) & - 2q^4q^5%
\end{matrix}%
\right)}
\end{multline}
We stress that $(g_{ij})$ does not fulfill \eqref{eq:27}. Hence, we shall
compute suitable tensors $w^i_{\alpha j}$.

A direct computation of $w^i_{\alpha j}$ as symmetries of \eqref{eq:271} is
very heavy. Since we have at our disposal a Lax pair, we can compute a sequence
of homogeneous conserved quantities with a standard technique in the theory of
integrable systems; see \cite{PV15} for details. Then, we can
transform them into symmetries using the Hamiltonian operator. We rewrite
\eqref{eq:24} in terms of $(q^i)$ and get
\begin{equation}  \label{eq:139}
\begin{pmatrix}
\psi \\ 
\psi^1 \\ 
\psi^2%
\end{pmatrix}%
_x = \lambda%
\begin{pmatrix}
0 & 1 & 0 \\ 
q^1 & q^3 & q^5 \\ 
q^2 & q^4 & q^6%
\end{pmatrix}
\begin{pmatrix}
\psi \\ 
\psi^1 \\ 
\psi^2%
\end{pmatrix}%
.
\end{equation}
By eliminating $\psi^{1}$, $\psi^{2}$ from \eqref{eq:139} we obtain the single
linear PDE 
\begin{multline*}
\big( - q^{1}_x\lambda^{2}q^{5} + q^{5}_x\lambda^{2}q^{1} + \lambda^{3}
q^{1}q^{5}q^{6} - \lambda^{3}q^{2}(q^{5})^{2} \big) \psi + \\
\big( - q^{3}_x \lambda q^{5} + q^{5}_x \lambda q^{3} -\lambda^{2} q^{1}
q^{5} +\lambda^{2} q^{3} q^{5} q^{6} -\lambda^{2} q^{4} (q^{5})^{2}\big)%
\psi_x \\
+\big( - q^{5}_x -\lambda q^{3} q^{5} -\lambda q^{5} q^{6}\big)\psi_{2x} +
q^{5} \psi_{3x} = 0
\end{multline*}%
The substitution $\psi =\exp \int rdx$ yields a nonlinear ordinary
differential equation on the function $r$ and its first, second and third
order derivatives. This function $r$ plays the role of a generating function
of conservation law densities with respect to the parameter $\lambda $ for
the system \eqref{six}. The expansion of $r$ at infinity (i.e. $\lambda
\rightarrow \infty $)%
\begin{equation*}
r=\lambda h_{-1}+h_{0}+\frac{h_{1}}{\lambda }+\frac{h_{2}}{\lambda ^{2}}+...,
\end{equation*}%
in the above equation leads to a sequence of differential relationships between
the coefficients $h_{-1}$, $h_{0}$, $h_{1}$,\dots\ The leading term (the
coefficient of $\lambda ^{3}$) coincides with the characteristic equation of
the eigenvalues of the matrix in \eqref{eq:139}. Thus, starting from
$h_{-1}=u^k$, $k=1$, $2$, $3$, the expansion of $r$ with respect to $\lambda $
has three branches of conservation law densities, that we denote by $h_{ik}$,
$i=0$, $1$,\dots Such densities are quasihomogeneous polynomials of degrees
$\deg h_{ik}=i+1$ with respect to the grading $\deg u=0$,
$\deg \partial _{x}^{{}}=1$, and their coefficients are expressible via
rational functions of $(u^{k})$.

Using Reduce \cite{KVV17} we found all expressions of $h_{ik}$, for $k=1$,
$2$, $3$ and $i=0$: they are of the form $h_{0k} =c_{ki}(u)u_{x}^i$, where
\begin{multline}  \label{eq:6}
h_{01} = \frac{1}{S_1}\Big( - 4 u^4_x (u^{5})^{2} + u^5_x((u^{6})^{2}-2
u^{6} u^{1} + (u^{1})^{2} - (u^{2})^{2} \\
+2 u^{2}u^{3} - (u^{3})^{2}) +2 u^6_x u^{5}(- u^{6} + u^{1}) +2 u^1_x
u^{5}(-2u^{1} + u^{2} + u^{3}) \\
+2 u^2_x u^{5}(u^{2}-u^{3}) +2 u^3_x u^{5}(-u^{2} + u^{3})\Big)
\end{multline}
\begin{multline}  \label{eq:5}
h_{02} = \frac{1}{S_2}\Big(4u^4_x(u^{5})^{2} + u^5_x ( - (u^{6})^{2} + 2
u^{6}u^{2} + (u^{1})^{2} -2u^{1}u^{3} \\
- (u^{2})^{2} + (u^{3})^{2} ) + 2 u^6_x u^{5} (u^{6} - u^{2}) + 2 u^1_x
u^{5}( - u^{1}+ u^{3}) + \\
2 u^2_x u^{5}( - u^{1}+ 2 u^{2}-u^{3})+ 2u^3_x u^{5}(u^{1}-u^{3})\Big)
\end{multline}

\begin{multline}  \label{eq:7}
h_{03} = \frac{1}{S_3}\Big(-4 u^4_x (u^{5})^{2} + u^5_x ((u^{6})^{2} - 2
u^{6} u^{3} - (u^{1})^{2} +2 u^{1}u^{2} \\
- (u^{2})^{2} + (u^{3})^{2}) + 2 u^6_x u^{5}( - u^{6} +u^{3} ) + 2 u^1_x
u^{5} (u^{1}-u_{2}) \\
+ 2u^2_x u^{5}( - u^{1}+ u^{2} ) + 2 u^3_x u^{5} (u^{1} +u^{2} -2 u^{3})\Big)
\end{multline}
where $S_1= 4u^{5}((u^{1})^{2} - u^{1}u^{2} - u^{1}u^{3} + u^{2}u^{3})$, $%
S_2 = 4 u^{5} (u^{1} u^{2}-u^{1} u^{3} - (u^{2})^{2} +u^{2} u^{3})$ $S_3 = 4
u^{5} (u^{1} u^{2} -u^{1} u^{3} -u^{2} u^{3} + (u^{3})^{2})$.

We observe that the above conserved densities are not independent. It holds: 
\begin{equation}  \label{eq:8}
h_{01} + h_{02} + h_{03} = 0.
\end{equation}

We get (higher) commuting flows from the above conserved densities by
the formula in potential coordinates
\begin{equation}  \label{eq:9}
b^i_t = \tilde{g}^{ik}\partial_x^{-1}\frac{\delta h}{\delta b^k} = \tilde{g}%
^{ik}\left( - \frac{\partial h}{\partial b^k_x} + \partial_x \frac{\partial h%
}{\partial b^k_{xx}}\right) = w^i_j(\mathbf{b}_x) b^j_{xx}.
\end{equation}
In particular, from the independent densities $h_{01}$ and $h_{02}$ we obtain
the commuting flows $w^i_{1j}(\mathbf{b}_x) b^j_{xx}$ and $w^i_{2j}(%
\mathbf{b}_x) b^j_{xx}$. We observe that they are commuting flows for the
system \eqref{eq:271} in potential coordinates, and they define higher
commuting flows $(w^i_{\alpha j}u^j_x)_x$ for the system~\eqref{eq:12}.  So,
they fulfill the conditions \eqref{eq:42} and \eqref{eq:43} as they are
invariant conditions. However, we need to check the Hamiltonian property of the
operator $B$ in coordinates $(q^i)$.

In principle, it is possible to invert the coordinate change \eqref{eq:13}
and express the commuting flows $(w^i_{\alpha j}u^j_x)_x$ in coordinates
$(q^i)$.

\begin{lemma}
  The change of coordinate formula for the flow $(w^i_{\alpha j}(%
  \mathbf{u})u^j_x)_x$ into the flow $(w^i_{\alpha j}(\mathbf{q}%
  )q^j_x)_x$ is
\begin{equation*}
\frac{\partial q^k}{\partial u^h}w^h_i(\mathbf{u})\frac{\partial u^i}{%
\partial q^j} = w^k_j(\mathbf{q}).
\end{equation*}
\end{lemma}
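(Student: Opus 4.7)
The plan is to apply the chain rule to the flow in $(u^i)$-coordinates and match the coefficient of the highest $x$-derivative on both sides, recognising the resulting identity as the standard transformation law of a $(1,1)$-tensor under the point change $q^k=q^k(\mathbf{u})$.

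Concretely, I would first expand the flow as
\begin{equation*}
u^i_t = (w^i_{\alpha j}(\mathbf{u}) u^j_x)_x = w^i_{\alpha j}(\mathbf{u})\, u^j_{xx} + \frac{\partial w^i_{\alpha j}}{\partial u^s}(\mathbf{u})\, u^s_x u^j_x,
\end{equation*}
so that $w^i_{\alpha j}(\mathbf{u})$ is exhibited as the coefficient of $u^j_{xx}$. Then, using $q^k_t = (\partial q^k/\partial u^h)\, u^h_t$ together with $u^h_{xx} = (\partial u^h/\partial q^j)\, q^j_{xx} + (\partial^2 u^h/\partial q^i \partial q^j)\, q^i_x q^j_x$ and $u^h_x = (\partial u^h/\partial q^j)\, q^j_x$, I would collect the coefficient of $q^j_{xx}$ in the resulting expression for $q^k_t$. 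Matching this with the $q^j_{xx}$-coefficient of the target form $q^k_t = (w^k_{\alpha j}(\mathbf{q}) q^j_x)_x$, which is simply $w^k_{\alpha j}(\mathbf{q})$, produces the stated identity.

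I do not anticipate a serious technical obstacle: the argument is a direct chain-rule calculation, and the formula is just the transformation of a $(1,1)$-tensor written with one index pushed up by the Jacobian $\partial q^k/\partial u^h$ and one index pulled down by $\partial u^i/\partial q^j$. The only subtle point worth emphasising is that the transformed flow must actually admit the form $(w^k_{\alpha j}(\mathbf{q}) q^j_x)_x$, i.e.\ the lower-order $q^i_x q^j_x$ terms reassemble consistently. This is guaranteed by the fact that the higher symmetries are constructed intrinsically from scalar conserved densities via the first-order Hamiltonian operator through~\eqref{eq:9}; since both the conserved densities $h_{0k}$ and the Hamiltonian operator are covariant geometric objects, the conservation-law form is preserved under the change of coordinates, and the entire content of the lemma reduces to the tensor transformation formula derived above.
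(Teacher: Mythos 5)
The paper states this lemma without proof, treating it as the standard transformation law of a $(1,1)$-tensor, and your chain-rule derivation --- matching the coefficients of $q^j_{xx}$ on both sides --- is exactly the intended argument and is correct. The only point where your justification is slightly off target is the preservation of the conservative form: covariance of the densities and of the operator does not by itself guarantee that $\partial q^k/\partial u^h\,\partial_x(F^h)$ is again a total $x$-derivative under a nonlinear point transformation; here it holds because the $q^i$ are themselves conserved densities of the commuting flows of the hierarchy, so their evolutions are in conservation form with fluxes homogeneous of degree one in $q_x$, which forces the flux to be $w^k_{\alpha j}(\mathbf{q})q^j_x$ with the $(1,1)$-transformed matrix. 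Since the displayed identity only requires the leading-coefficient matching, your proof of the stated formula stands.
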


We remain with the computational problem of expressing $(u^i)$ in terms of $%
(q^k)$. This would lead to complicated expressions involving roots, so we will
write down the matrix $w_{\alpha kj}$ for the two flows ($a=1,2$) using the
coordinates $(u^i)$ as parameters for $(q^k)$. The two matrices $w_{\alpha kj}$
turn out to be skew-symmetric, so that the condition~\eqref{eq:29} is
satisfied. All coefficients $w_{\alpha kj}$ have denominators that contain
factors of
\begin{equation}
  \label{eq:15}
  \Delta = \sqrt{\vert\det(g_{ij})\vert} = (u^1-u^2)(u^1-u^3)(u^2-u^3)u^5;
\end{equation}
if we introduce the notation $\tilde{w}_{\alpha kj} = \Delta w_{\alpha kj}$, then
the only nonzero components of the two matrices $\tilde{w}_{\alpha kj}$ (for
$k<j$) are
\begin{align}  \label{eq:16}
  & \tilde{w}_{113} =
    \tfrac{-(u^{2} + u^{3} - u^{1}-u^{6}) (u^{2} - u^{3}) u^{5}}{2}
\\
& \tilde{w}_{114} = (u^{2}-u^{3})(u^{5})^{2} \\
&\tilde{w}_{125} = \tfrac{-(u^{2}+u^{3}-u^{1}-u^{6})(u^{2}-u^{3})u^{5}}{2} \\
& \tilde{w}_{126} = (u^{2}-u^{3})(u^{5})^{2} \\
& \tilde{w}_{134} = (u^{2}-u^{3})(u^{5})^{2} u^{1} \\
& \tilde{w}_{135} = \tfrac{-(4 u^{4} u^{5} + (u^{6})^{2}-2 u^{6} u^{1} +(u^{1})^{2}
- (u^{2})^{2} +2 u^{2} u^{3} - (u^{3})^{2})
(u^{6}+u^{1}-u^{2}-u^{3})(u^{2}-u^{3})}{8} \\
& \tilde{w}_{145} = \tfrac{-(4 u^{4} u^{5}+ (u^{6})^{2} -4 u^{6} u^{1}- (u^{1})^{2}
+2 u^{1} u^{2} +2 u^{1} u^{3}- (u^{2})^{2}+2 u^{2} u^{3}- (u^{3})^{2}
)(u^{2} -u^{3})u^{5}}{4} \\
& \tilde{w}_{146} = (u^{2}-u^{3}) (u^{5})^{2} u^{1} \\
& \tilde{w}_{156} = \tfrac{-(4 u^{4} u^{5}+ (u^{6})^{2}-2 u^{6} u^{1} + (u^{1})^{2}-
(u^{2})^{2}+2 u^{2} u^{3} - (u^{3})^{2}) (u^{6} +u^{1} -u^{2} -u^{3}) (u^{2}
-u^{3})}{8}
\end{align}
and 
\begin{align}  \label{eq:18}
& \tilde{w}_{213} = \tfrac{-(u^{6} -u^{1}+u^{2}-u^{3})(u^{1} -u^{3}) u^{5}}{2} \\
& \tilde{w}_{214} = - (u^{1}-u^{3}) (u^{5})^{2} \\
& \tilde{w}_{225} = \tfrac{-(u^{6} -u^{1} +u^{2}-u^{3})(u^{1}-u^{3} )u^{5}}{ 2} \\
& \tilde{w}_{226} = - (u^{1} -u^{3} )(u^{5})^{2} \\
& \tilde{w}_{234} = - (u^{1}-u^{3} ) (u^{5})^{2} u^{2} \\
& \tilde{w}_{235} = \tfrac{(4 u^{4} u^{5} + (u^{6})^{2} -2 u^{6} u^{2} - (u^{1})^{2}
+2 u^{1} u^{3} + (u^{2})^{2}- (u^{3})^{2} ) (u^{6}-u^{1} +u^{2}-u^{3})
(u^{1} -u^{3})}{8} \\
& \tilde{w}_{245} =\tfrac{-((u^{2} -u^{3})^{2}+ (u^{1})^{2} -2(u^{2}+u^{3} ) u^{1} -
(u^{6} -4 u^{2})u^{6}-4 u^{4} u^{5})(u^{1}-u^{3})u^{5}}{4} \\
& \tilde{w}_{246} = -(u^{1} -u^{3}) (u^{5})^{2} u^{2} \\
& \tilde{w}_{256} = \tfrac{(4 u^{4} u^{5} + (u^{6})^{2} -2 u^{6} u^{2}- (u^{1})^{2}
+2 u^{1} u^{3}+ (u^{2})^{2} - (u^{3})^{2} ) (u^{6} - u^{1} + u^{2} - u^{3})
(u^{1}-u^{3}) }{8}
\end{align}

We end this paper by exhibiting the third-order Hamiltonian operator for the
Oriented Associativity equation in hydrodynamic form.

\begin{theorem}
The Oriented Associativity equation in hydrodynamic form \eqref{six} admits
the non-local third-order Dubrovin--Novikov Hamiltonian operator 
\begin{multline}  \label{eq:3}
B = \partial_x^{}\Big(g^{ij}\partial_x^{} + c^{ij}_ku^k_x + c^1 w^i_{1
k}u^k_x\partial_x^{-1}w^j_{1 h}u^h_x \\
+ c^2(w^i_{1 k}u^k_x\partial_x^{-1}w^j_{2 h}u^h_x + w^i_{2
k}u^k_x\partial_x^{-1}w^j_{1 h}u^h_x) + c^3 w^i_{2
k}u^k_x\partial_x^{-1}w^j_{2 h}u^h_x \Big)\partial_x^{}
\end{multline}
where $g^{ij}$ is the inverse of the Monge metric \eqref{eq:10}, $c^{ij}_k$
are defined through $g^{ij}$, $w^i_{1j}$ and $w^i_{2j}$ are the matrices of
the two commuting flows that we found above and 
\begin{equation}  \label{eq:4}
c^1 = 2,\quad c^2 = 1,\quad c^3 = 2.
\end{equation}
\end{theorem}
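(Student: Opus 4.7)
The plan is to verify that the operator $B$ displayed in~\eqref{eq:3} satisfies the Hamiltonianity conditions for a third-order non-local Dubrovin--Novikov operator, extending the framework of~\cite{CFPV18} to a tail with two independent non-local terms. Since the metric $(g_{ij})$ of~\eqref{eq:10} was already constructed by solving~\eqref{eq:33}--\eqref{eq:45}, and the matrices $w^i_{1j}$, $w^i_{2j}$ were obtained from the commuting flows associated with $h_{01}$ and $h_{02}$, the remaining work is to check a definite list of identities and to pin down the three constants $c^{1},c^{2},c^{3}$ in~\eqref{eq:4}.

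First, I would transfer $w^i_{1j}$ and $w^i_{2j}$ from the $(u^i)$-coordinates to the $(q^i)$-coordinates via the transformation lemma. With $w_{\alpha ij} = g_{is}w^{s}_{\alpha j}$, the skew-symmetry~\eqref{eq:29} is already recorded in the paper. Next I would verify condition~\eqref{eq:30}, which says that each $w_\alpha$ is parallel with respect to the connection $c^{s}_{ij}$ induced by~\eqref{eq:28}; this is a direct, if lengthy, derivative computation best carried out symbolically using the denominator $\Delta$ from~\eqref{eq:15} to clear fractions. The compatibility conditions~\eqref{eq:467} with the system~\eqref{six} reduce, by the Remark, to~\eqref{eq:33}--\eqref{eq:45}, \eqref{eq:42}, \eqref{eq:43}: the former hold by construction of $(g_{ij})$, while the latter express that $w^i_{\alpha j}u^j_x$ are symmetries, which is how the matrices were produced.

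The heart of the proof is the multi-tail version of~\eqref{eq:31},
\begin{equation*}
c_{nml,k} + c^{s}_{ml}\,c_{snk} + \sum_{\alpha,\beta=1}^{2} c^{\alpha\beta}\, w_{\alpha\, ml}\, w_{\beta\, nk} = 0,
\end{equation*}
with the symmetric constant matrix $c^{11}=c^{1}$, $c^{12}=c^{21}=c^{2}$, $c^{22}=c^{3}$. I would compute the tensor $T_{nmlk}:= c_{nml,k}+c^{s}_{ml}c_{snk}$ (which is nonzero, as flagged after~\eqref{eq:10}) and expand it in the basis of the three symmetric products $w_{1\,ml}w_{1\,nk}$, $w_{1\,(ml}w_{2\,nk)}$, $w_{2\,ml}w_{2\,nk}$. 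The vanishing of this decomposition on every component of the $6{\times}6{\times}6{\times}6$ tensor produces a massively overdetermined linear system on the three unknowns $c^{1},c^{2},c^{3}$; the claim is that the system is consistent and has the unique solution~\eqref{eq:4}.

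The main obstacle is precisely this last algebraic identity. The expressions $c_{nml,k}$, $c^{s}_{ml}c_{snk}$, and the bilinears in the $w_\alpha$ are huge rational functions of $(q^i)$ (or, equivalently, of $(u^i)$ via the Vi\`ete parametrization); a direct pen-and-paper check is not realistic. The plan is to perform the verification with Reduce and the \texttt{CDE} package~\cite{KVV17,VitoloHO}: substitute~\eqref{eq:10}, \eqref{eq:28}, and the $w_{\alpha ij}$ of~\eqref{eq:16}--\eqref{eq:18} (after the coordinate change), multiply through by powers of $\Delta$ to clear denominators, reduce the resulting polynomial identity, and solve for $c^{1},c^{2},c^{3}$. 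The emergence of the integer values $c^{1}=2$, $c^{2}=1$, $c^{3}=2$ as the unique consistent solution is both the computational punchline and the structural evidence that $B$ belongs to the non-local class of~\cite{CFPV18}.
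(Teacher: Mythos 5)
Your proposal is correct and follows essentially the same route as the paper: the authors' proof likewise reduces the theorem to verifying the parallelism condition \eqref{eq:30} (written with the chain-rule factor $\partial u^s/\partial q^l$ since the $w_{\alpha ij}$ are parametrized by the $(u^i)$) together with the multi-tail analogue of \eqref{eq:31}, namely $c_{ijk,l}+g^{pq}c_{pjk}c_{qil}+c^1w_{1jk}w_{1il}+c^2(w_{1jk}w_{2il}+w_{1il}w_{2jk})+c^3w_{2jk}w_{2il}=0$, checked by computer algebra with the stated constants. Your additional remarks on skew-symmetry, the role of the Remark, and solving the overdetermined linear system for $c^1,c^2,c^3$ are consistent elaborations of the same verification.
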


\begin{proof}
It is only necessary to check the conditions 
\begin{align}  \label{eq:20}
  \begin{split}
  &c_{ijk,l} + g^{pq}c_{pjk}c_{qil} +
    \\
   &\hphantom{c_{ijk,l} +}c^1w_{1jk}w_{1il} + c^2(w_{1jk}w_{2il} +
    w_{1il}w_{2jk}) + c^3 w_{2jk}w_{2il} = 0,
  \end{split}
     \\
  &w_{aij,s}\frac{\partial u^s}{\partial q^l} - g^{pq}c_{pij}w_{aql} =
   0,\qquad a=1,2
\end{align}
which yield the Hamiltonian property with the given values of the constants
$c^i$.
\end{proof}

\section*{Acknowledgments}

We thank Matteo Casati, Boris Dubrovin, Evgeny Ferapontov and Paolo Lorenzoni
for stimulating discussions.

This work was partially supported by the grant of Presidium of RAS
\textquotedblleft Fundamental Problems of Nonlinear Dynamics\textquotedblright\
and by the RFBR grant 17-01-00366.  RV acknowledges the support of Dipartimento
di Matematica e Fisica ``E. De Giorgi'' of the Universit\`a del Salento, of
Istituto Nazionale di Fisica Nucleare by by IS-CSN4 Mathematical Methods of
Nonlinear Physics, of GNFM of Istituto Nazionale di Alta Matematica\\
\url{http://www.altamatematica.it}.

The authors thank the Centro Internazionale per la Ricerca Matematica (Trento,
Italy) for its hospitality in March 2018 within the framework of a `research
in pairs' program, where this paper was completed.

\end{document}